\newtheorem{theorem}{Theorem}
\newtheorem{proposition}{Proposition}
\newtheorem{lemma}{Lemma}
\newtheorem{corollary}{Corollary}
\newcommand*{\myabstract}[1]{ \begin{abstract} #1 \end{abstract} }
\begin{document}
\title{\bf On commutativity and near commutativity of translational and rotational averages: Analytical proofs and numerical examinations}
\author{Len Bos%
\footnote{Dipartimento di Informatica, Universit\`a di Verona, Italy;
{\tt leonardpeter.bos@univr.it}}\,,
David R. Dalton%
\footnote{Department of Earth Sciences, Memorial University of Newfoundland,  Canada; 
{\tt dalton.nfld@gmail.com}}\,,
Michael A. Slawinski%
\footnote{Department of Earth Sciences, 
Memorial University of Newfoundland, Canada; 
{\tt mslawins@mac.com}}}
\maketitle
\myabstract{%
We show that, in general, the translational average over a spatial variable---discussed by Backus~\cite{backus}, and referred to as the equivalent-medium average---and the rotational average over a symmetry group at a point---discussed by Gazis et al.~\cite{gazis}, and referred to as the effective-medium average---do not commute.
However, they do commute in special cases of particular symmetry classes, which correspond to special relations among the elasticity parameters.
We also show that this noncommutativity is a function of the strength of anisotropy.
Surprisingly, a perturbation of the elasticity parameters about a point of weak anisotropy results in the commutator of the two types of averaging being of the order of the {\it square\/} of this perturbation.
Thus, these averages nearly commute in the case of weak anisotropy, which is of interest in such disciplines as quantitative seismology, where the weak-anisotropy assumption results in empirically adequate models.}
\section{Introduction}
Hookean solids are defined by their mechanical property relating linearly the stress tensor,~$\sigma$\,, and the strain tensor,~$\varepsilon$\,,
\begin{equation*}
\sigma_{ij}=\sum_{k=1}^3\sum_{\ell=1}^3c_{ijk\ell}\varepsilon_{k\ell}\,,\qquad i,j=1,2,3
\,.
\end{equation*}
The elasticity tensor,~$c$\,, belongs to one of the eight material-symmetry classes shown in Figure~\ref{fig:orderrelation}.
\begin{figure}
\begin{center}
\includegraphics[scale=0.7]{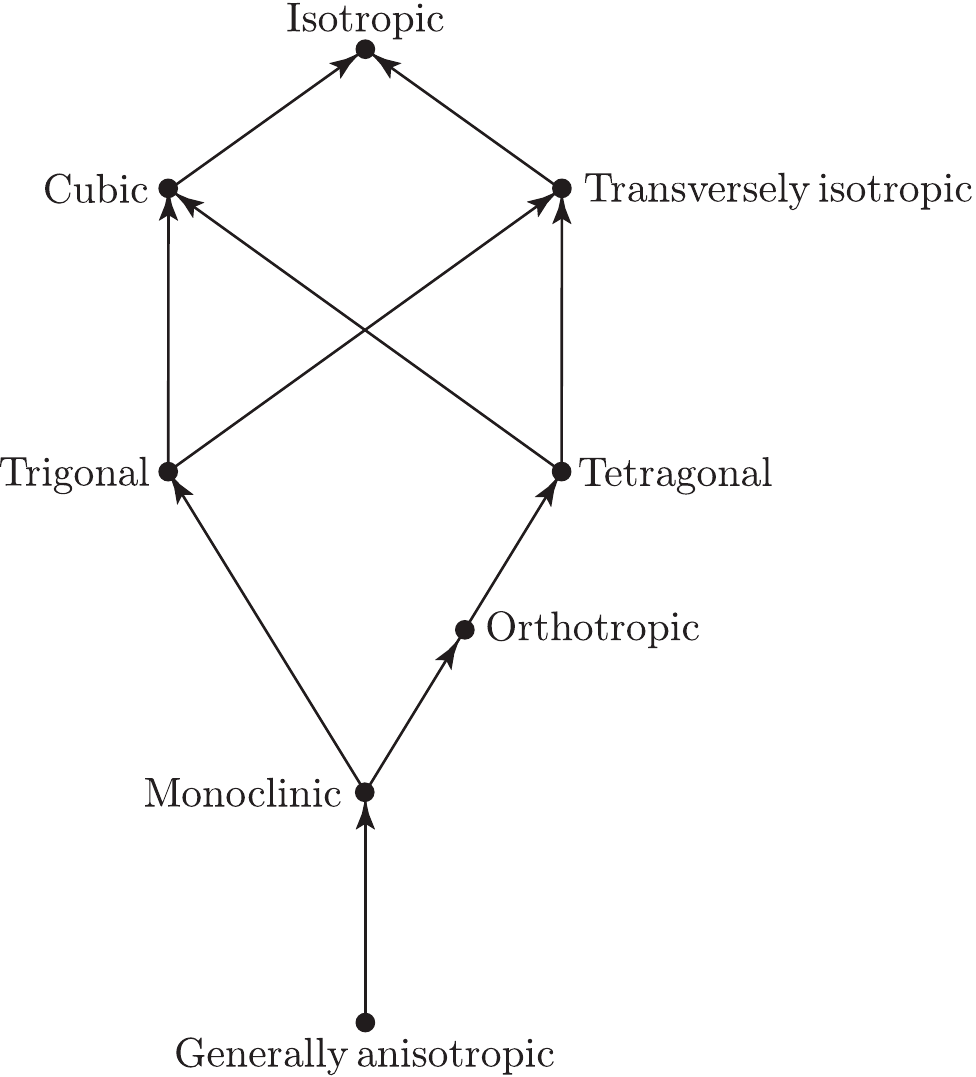}
\end{center}
\caption{\small{Partial ordering of material-symmetry classes of elasticity tensors:
Arrows indicate subgroups. For instance, monoclinic is a subgroup of all symmetries, except general anisotropy; in particular, it is a subgroup of both orthotropic and trigonal symmetries, but orthotropic symmetry is not a subgroup of trigonal or {\it vice-versa}.}}
\label{fig:orderrelation}
\end{figure}

The Backus~\cite{backus} average, which is a moving average over a spatial inhomogeneity, allows us to quantify the response of a wave propagating through a series of parallel layers whose thicknesses are much smaller than the wavelength of a signal.
Each layer is a homogeneous Hookean solid exhibiting a given material symmetry with its elasticity parameters.
The average results in a Hookean solid whose elasticity parameters---and, hence, its material symmetry---allow us to model a long-wavelength response.
The material symmetry of a resulting medium, which we refer to as {\sl equivalent}, is a consequence of the symmetries exhibited by the averaged layers.

As shown by Backus~\cite{backus}, the medium equivalent to a stack of isotropic or transversely isotropic layers is a homogeneous, or nearly homogeneous, transversely is\-o\-trop\-ic medium, where a {\it nearly\/} homogeneous medium is a consequence of a {\em moving\/} average.
The Backus~\cite{backus} formulation is reviewed and extended by Bos et al.~\cite{bos}, where formulations for generally anisotropic, monoclinic, and orthotropic thin layers are also derived.
Also, Bos et al.~\cite{bos} examine the underlying assumptions and approximations behind the Backus~\cite{backus} formulation, which is derived by expressing rapidly varying stresses and strains in terms of products of algebraic combinations of rapidly varying elasticity parameters with slowly varying stresses and strains.
The only mathematical approximation of Backus~\cite{backus} is that the average of a product of a rapidly varying function and a slowly varying function is approximately equal to the product of the averages of these two functions.
This approximation is discussed by Bos et al.~\cite{bos,BosX}.

According to Backus~\cite{backus}, the average of $f(x_3)$ of ``width''~$\ell'$  is
\begin{equation}
\label{eq:BackusOne}
\overline f(x_3):=\int\limits_{-\infty}^\infty w(\zeta-x_3)f(\zeta)\,{\rm d}\zeta
\,,
\end{equation}
where $w(x_3)$ is a weight function with the following properties:
\begin{equation*}
w(x_3)\geqslant0\,,
\quad w(\pm\infty)=0\,,
\quad
\int\limits_{-\infty}^\infty w(x_3)\,{\rm d}x_3=1\,,
\end{equation*}
\vspace*{-10pt}
\begin{equation*}
\int\limits_{-\infty}^\infty x_3w(x_3)\,{\rm d}x_3=0\,,
\quad
\int\limits_{-\infty}^\infty x_3^2w(x_3)\,{\rm d}x_3=(\ell')^2\,.
\end{equation*}
These properties define $w(x_3)$ as a probability-density function, whose mean is zero and whose standard deviation is~$\ell'$\,, thus explaining the use of the term ``width'' for~$\ell'$\,.

The Gazis et al.~\cite{gazis} average, which is an average over an anisotropic symmetry group, allows us to obtain the closest symmetric counterpart---in the Frobenius sense---of a chosen material symmetry to a generally anisotropic Hookean solid.
The average is a Hookean solid, to which we refer as {\sl effective}, and whose elasticity parameters correspond to a symmetry chosen {\it a priori}.

The Gazis et al.~\cite{gazis} average is a projection given by
\begin{equation}
\widetilde c^{\,\,\rm sym}:=\intop_{G^{\rm sym}}(g\circ c)\,\mathrm{d}\mu(g)
\,,
\label{eq:proj}
\end{equation}
where the integration is over the symmetry group,~$G^{\rm sym}$\,, whose elements are~$g$\,, with respect to the invariant measure,~$\mu$\,, normalized so that $\mu(G^{\rm sym})=1$\,; $\widetilde c^{\,\,\rm sym}$ is the orthogonal projection of $c$\,, in the sense of the Frobenius norm,  onto the linear space containing all tensors of that symmetry, which are~$c^{\,\,\rm sym}$\,.
Integral~(\ref{eq:proj}) reduces to a finite sum for the classes whose symmetry groups are finite, which are all classes in Figure~\ref{fig:orderrelation}, except isotropy and transverse isotropy. 

The Gazis et al.~\cite{gazis} approach is reviewed and extended by Danek et al.~\cite{dks1,dks2} in the context of random errors.
Therein, elasticity tensors are not constrained to the same---or even different but known---orientation of the coordinate system.
In other words, in general, the closest---and more symmetric counterpart---exhibits different orientation of symmetry planes and axes than does its original material. 

Let us emphasize that the fundamental distinction between the two averages is their domain of operation.
The Gazis et al.~\cite{gazis} average is an average over symmetry groups at a point and the Backus~\cite{backus} average is a spatial average over a distance.
These averages can be used separately or together.
Hence, an examination of their commutativity provides us with an insight into their meaning and into allowable mathematical operations.

The interplay between anisotropy and inhomogeneity is an important factor in modelling traveltime data in seismology.
Similar traveltimes can be obtained by considering anisotropy,
inhomogeneity or their combination.
However---since the purpose of modelling is to
infer a realistic medium, not only to account for the measured traveltimes---the
interplay between anisotropy and inhomogeneity is investigated in the context of
symmetry increase, homogenization and their commutativity.

The commutator of two operators is defined as $[A,B]:=AB-BA$ and is zero if $A$ and $B$ commute; more generally, the size of the commutator gives an indication of how close they are to commuting.
In our case, we apply the two types of averages to a medium with a certain symmetry class with parameters that may be perturbed by a perturbation parameter, say,~$h$\,.
Thus we may consider the commutator $[A,B]=:F(h)$ to be a function of $h$\,.
If, for no perturbation---which means that $h=0$\,---the averages commute---in other words, $F(0)=0$\,---we expect
\begin{equation*}
[A,B]=F(0)+F'(0)\,h + \cdots=F'(0)\,h+\cdots	
\end{equation*}
to be of order $h$\,.
Surprisingly, we show that in certain cases, perturbing about a symmetry class for which there is commutativity, we have $[A,B]=O(h^2)$\,, which means that the commutator is much smaller than might originally have been expected, and we have very near commutativity.

We begin this paper by formulating analytically the commutativity diagrams between the two averages.
We proceed from generally anisotropic layers to a monoclinic medium, from monoclinic layers to an orthotropic medium, and from orthotropic layers to a tetragonal medium.
Also, we discuss transversely isotropic layers, which---depending on the order of operations---result in a transversely isotropic or isotropic medium.
Subsequently, we examine numerically the commutativity, which allows us to consider the case of weak anisotropy.
We conclude this paper with both expected and unexpected results.
\section{Analytical formulation}
\label{sec:AnalForm}
\subsection{Generally anisotropic layers and monoclinic medium}
Let us consider a stack of generally anisotropic layers to obtain a monoclinic medium.
To examine the commutativity between the Backus~\cite{backus} and Gazis et al.~\cite{gazis} averages, let us study the following diagram,
\begin{equation}
\label{eq:CD2}
\begin{CD}
\rm{aniso}@>\rm{B}>>\rm{aniso}\\
@V\mathrm{G}VV                         @VV\rm{G}V\\
\rm{mono}@>>\rm{B}>\rm{mono}
\end{CD}
\end{equation}
and  Theorem~\ref{thm:One}, as well as its corollary.
\begin{theorem}
\label{thm:One}
In general, the Backus~\cite{backus} and Gazis et al.~\cite{gazis} averages do not commute.
\end{theorem}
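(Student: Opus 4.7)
The plan is to prove the theorem by exhibiting a counterexample. Since the claim is only that the two averages do \emph{not} commute in general, it suffices to produce one concrete instance of the diagram (\ref{eq:CD2}) in which the two compositions $\mathrm{B}\circ\mathrm{G}$ and $\mathrm{G}\circ\mathrm{B}$ applied to a stack of generally anisotropic layers disagree.

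First, I would set down the two ingredients explicitly. The Gazis et al.\ projection onto the monoclinic class, for a coordinate system adapted to the reflection symmetry, is simply an orthogonal projection in the Frobenius sense that zeroes out those $c_{ijk\ell}$ which are odd under the distinguished reflection (the ``non-monoclinic'' entries such as $c_{1113}$, $c_{1123}$, $c_{2213}$, $c_{1213}$, etc.); this map is linear in the components of $c$. The Backus average of a generally anisotropic stack, on the other hand, as recalled from Bos et al.~\cite{bos}, replaces certain components by arithmetic averages but others by \emph{rational} combinations of averages of reciprocals and products of the $c_{ijk\ell}$ (arising from the long-wavelength elimination of rapidly varying stress and strain components). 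It is therefore a nonlinear map on the space of elasticity tensors.

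With these facts in hand, the obstruction to commutativity becomes transparent: in the path $\mathrm{G}\circ\mathrm{B}$, the non-monoclinic entries of the layers enter the nonlinear Backus formulas and can contaminate the values of entries that survive the final Gazis projection; in the path $\mathrm{B}\circ\mathrm{G}$ those same non-monoclinic entries are annihilated \emph{before} the Backus step and cannot contribute at all. Concretely, I would locate at least one index pattern $(ijk\ell)$ that is \emph{monoclinic} (i.e.\ not killed by $\mathrm{G}$) whose Backus formula for a generally anisotropic stack depends rationally on a non-monoclinic component such as $c_{1113}$. Such a pattern is guaranteed to exist because the Backus reduction couples the $i3$-row block of $c$ with all remaining components, and this coupling produces quadratic cross-terms in the surviving entries.

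Once such a coupling is identified, the counterexample is constructed by taking, for instance, two layers of equal weight whose tensors are monoclinic except that in one layer a single non-monoclinic entry like $c_{1113}$ is set to a non-zero value $h$. Computing both compositions, $\mathrm{G}\circ\mathrm{B}$ produces a term of order $h^{2}$ in the distinguished monoclinic component, whereas $\mathrm{B}\circ\mathrm{G}$ produces none, so the commutator is non-zero. The main obstacle is not conceptual but organisational: carrying out the Backus reduction for the fully anisotropic case involves cumbersome bookkeeping of which components mix, and one must identify the right surviving entry to display so that the discrepancy is visible in a single line rather than buried in a large algebraic expression. Once that entry is chosen, verification is an immediate substitution, which can also be cross-checked numerically, yielding a clean proof of the theorem together with the corollary identifying cases in which commutativity can nevertheless occur.
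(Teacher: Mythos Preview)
Your proposal is correct and rests on exactly the mechanism the paper uses: the Gazis projection is linear (it merely zeroes the non-monoclinic entries), while the Backus average is nonlinear and couples those entries into the surviving monoclinic ones, so the order of operations matters. The paper's presentation differs only in execution. Instead of constructing an explicit two-layer perturbation, the paper writes out the full counterclockwise path $\mathrm{B}\circ\mathrm{G}$ for the aniso$\to$mono case using the closed-form Backus expressions of Bos et al.~\cite{bos}, and then simply asserts that the clockwise path $\mathrm{G}\circ\mathrm{B}$ produces monoclinic entries that are ``too complicated to display explicitly'' but are in general different. Your concrete two-layer counterexample would in fact make that assertion rigorous where the paper leaves it implicit. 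You may also find it cleaner to invoke the mono$\to$ortho instance (Proposition~\ref{thm:Two} and equations~(\ref{eq:cl1})--(\ref{eq:ccl})), where both paths are short enough to display in full and the discrepancy in $c_{1212}$, $c_{1313}$, $c_{2323}$ is visible by inspection; that alone already suffices to prove Theorem~\ref{thm:One}.
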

\begin{proof}
This is a consequence of  the following  more specific case.
\begin{proposition}
For the generally anisotropic and monoclinic symmetries, the Backus~\cite{backus} and Gazis et al.~\cite{gazis} averages do not commute.
\end{proposition}
\noindent To understand this corollary, we invoke the following lemma, whose proof is in Appendix~\ref{AppOne1}.
\begin{lemma}
\label{lem:Mono}
For the effective monoclinic symmetry, the result of the Gazis et al.~\cite{gazis} average is tantamount to replacing each $c_{ijk\ell}$\,, in a generally anisotropic tensor, by its corresponding $c_{ijk\ell}$ of the monoclinic tensor, expressed in the natural coordinate system, including replacements of the anisotropic-tensor components by the zeros of the corresponding monoclinic components. 
\end{lemma}

Let us first examine the counterclockwise path of Diagram~(\ref{eq:CD2}).
Lemma~\ref{lem:Mono} entails the following corollary.
\begin{corollary}
\label{col:Mono}
For the effective monoclinic symmetry, given a generally anisotropic tensor,~$C$\,, 
\begin{equation}
\label{eq:GazisMono}
\widetilde{C}^{\,\rm mono}=C^{\,\rm mono}
\,;
\end{equation} 
where $\widetilde{C}^{\,\rm mono}$ is the Gazis et al.~\cite{gazis} average of~$C$\,, and $C^{\,\rm mono}$ is the monoclinic tensor whose nonzero entries are the same as for~$C$\,.
\end{corollary}

According to Corollary~\ref{col:Mono}, the effective monoclinic
tensor is obtained simply by setting  to zero---in the generally anisotropic tensor---the
components that are zero for a monoclinic tensor. 
Then, the second counterclockwise branch of Diagram~(\ref{eq:CD2}) is performed as follows.
Applying the Backus~\cite{backus} average, we obtain (Bos et al.~\cite{bos})
\begin{equation*}
\langle c_{3333}\rangle=\overline{\left(\dfrac{1}{c_{3333}}\right)}^{\,\,-1}\,,
\qquad
\langle c_{2323}\rangle=\dfrac{\overline{\left(\dfrac{c_{2323}}{D}\right)}}{2D_2}\,,
\end{equation*} 
\begin{equation*}
\langle c_{1313}\rangle=\dfrac{\overline{\left(\dfrac{c_{1313}}{D}\right)}}{2D_2}\,,
\qquad
\langle c_{2313}\rangle=\dfrac{\overline{\left(\dfrac{c_{2313}}{D}\right)}}{2D_2}\,,
\end{equation*}
where $D\equiv 2(c_{2323}c_{1313}-c_{2313}^2)$ and
$D_2\equiv \overline{(c_{1313}/D)}\,\,\overline{(c_{2323}/D)}-\overline{(c_{2313}/D)}^{\,2}$\,.
We also obtain
\begin{equation*}
\langle c_{1133}\rangle=
\overline{\left(\dfrac{1}{c_{3333}}\right)}^{\,\,-1}
\overline{\left(\dfrac{c_{1133}}{c_{3333}}\right)}\,,
\quad
\langle c_{2233}\rangle=
\overline{\left(\dfrac{1}{c_{3333}}\right)}^{\,\,-1}
\overline{\left(\dfrac{c_{2233}}{c_{3333}}\right)}\,,
\end{equation*}
\begin{equation*}
\langle c_{3312}\rangle=
\overline{\left(\dfrac{1}{c_{3333}}\right)}^{\,\,-1}
\overline{\left(\dfrac{c_{3312}}{c_{3333}}\right)}\,,
\quad
\langle c_{1111}\rangle=
\overline{c_{1111}}-\overline{\left(\dfrac{c_{1133}^2}{c_{3333}}\right)}+
\overline{\left(\dfrac{1}{c_{3333}}\right)}^{\,\,-1}
\overline{\left(\dfrac{c_{1133}}{c_{3333}}\right)}^{\,2}\,,
\end{equation*}
\begin{equation*}
\langle c_{1122}\rangle=
\overline{c_{1122}}-\overline{\left(\dfrac{c_{1133}\,c_{2233}}{c_{3333}}\right)}+
\overline{\left(\dfrac{1}{c_{3333}}\right)}^{\,\,-1}
\overline{\left(\dfrac{c_{1133}}{c_{3333}}\right)}\,\,
\overline{\left(\dfrac{c_{2233}}{c_{3333}}\right)}\,,
\end{equation*}
\begin{equation*}
\langle c_{2222}\rangle=
\overline{c_{2222}}-\overline{\left(\dfrac{c_{2233}^2}{c_{3333}}\right)}+
\overline{\left(\dfrac{1}{c_{3333}}\right)}^{\,\,-1}
\overline{\left(\dfrac{c_{2233}}{c_{3333}}\right)}^{\,2}\,,
\end{equation*}
\begin{equation*}
\langle c_{1212}\rangle=
\overline{c_{1212}}-\overline{\left(\dfrac{c_{3312}^2}{c_{3333}}\right)}+
\overline{\left(\dfrac{1}{c_{3333}}\right)}^{\,\,-1}
\overline{\left(\dfrac{c_{3312}}{c_{3333}}\right)}^{\,2}\,,
\end{equation*}
\begin{equation*}
\langle c_{1112}\rangle=
\overline{c_{1112}}-\overline{\left(\dfrac{c_{3312}\,c_{1133}}{c_{3333}}\right)}+
\overline{\left(\dfrac{1}{c_{3333}}\right)}^{\,\,-1}
\overline{\left(\dfrac{c_{1133}}{c_{3333}}\right)}\,\,
\overline{\left(\dfrac{c_{3312}}{c_{3333}}\right)}
\end{equation*}
and
\begin{equation*}
\langle c_{2212}\rangle=
\overline{c_{2212}}-\overline{\left(\dfrac{c_{3312}\,c_{2233}}{c_{3333}}\right)}+
\overline{\left(\dfrac{1}{c_{3333}}\right)}^{\,\,-1}
\overline{\left(\dfrac{c_{2233}}{c_{3333}}\right)}\,\,
\overline{\left(\dfrac{c_{3312}}{c_{3333}}\right)}\,,
\end{equation*}
where angle brackets denote the equivalent-medium elasticity parameters.
The other equivalent-medium elasticity parameters are zero.

Following the clockwise path of Diagram~(\ref{eq:CD2}),
the upper branch is derived in matrix form in Bos et al.~\cite{bos}.
Then, in accordance with Bos et al.~\cite{bos}, the result of the right-hand branch is
derived by setting entries in the generally anisotropic tensor 
that are zero for a monoclinic tensor to zero.
The nonzero entries, which are too complicated to display explicitly, are---in general---not the same as the result of the counterclockwise path.
Hence, for generally anisotropic and monoclinic symmetries, the Backus~\cite{backus} and Gazis et al.~\cite{gazis} averages do not commute.
\end{proof}
\subsection{Monoclinic layers and orthotropic medium}
\label{sec:mono}
Theorem~\ref{thm:One} remains valid for layers exhibiting higher material symmetries.
For such symmetries, simpler expressions of the corresponding elasticity tensors allow us to examine special cases that result in commutativity.
Let us consider the following instance of Theorem~\ref{thm:One}.
\begin{proposition}
\label{thm:Two}
For the monoclinic and orthotropic symmetries, the Backus~\cite{backus} and Gazis et al.~\cite{gazis} averages do not commute.
\end{proposition}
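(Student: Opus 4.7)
The plan is to adapt the argument used for Theorem~\ref{thm:One} to this more restrictive setting, where the extra symmetry of the orthotropic target makes the obstruction to commutativity cleaner to exhibit. First I would establish, by the same orbit-integration computation as in Lemma~\ref{lem:Mono} (the monoclinic-to-orthotropic analogue being a straightforward variant of the argument in Appendix~\ref{AppOne1}), that the Gazis et al.\ projection from monoclinic onto orthotropic symmetry amounts, in the natural coordinates, to zeroing the entries $c_{1112}$, $c_{2212}$, $c_{3312}$ and $c_{2313}$ (in Voigt notation $c_{16},c_{26},c_{36},c_{45}$), while leaving every other entry of the monoclinic tensor untouched. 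This reduces the diagram to an explicit comparison of two tensors.

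Along the counterclockwise path (G then B), after zeroing those four entries, the monoclinic Backus formulas collapse to the orthotropic ones: all averaged entries involving $c_{3312}$ or $c_{2313}$ simply drop out because those functions are identically zero before the spatial average is taken. Along the clockwise path (B then G), I first apply the monoclinic Backus formulas of Bos et al.~\cite{bos} displayed above; the Gazis projection then sets the four averaged entries $\langle c_{1112}\rangle,\langle c_{2212}\rangle,\langle c_{3312}\rangle,\langle c_{2313}\rangle$ to zero, but crucially it does \emph{not} annihilate the residues that $c_{3312}$ has left in the surviving entries---for example the contribution $-\overline{c_{3312}^{\,2}/c_{3333}}+\overline{1/c_{3333}}^{\,-1}\overline{c_{3312}/c_{3333}}^{\,2}$ in $\langle c_{1212}\rangle$, and the analogous cross terms in $\langle c_{1111}\rangle$, $\langle c_{1122}\rangle$, $\langle c_{2222}\rangle$. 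These residues are absent from the counterclockwise tensor, so the two paths disagree entry-by-entry.

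To promote this structural discrepancy to a rigorous inequality, I would exhibit a concrete layer stack for which the residual is manifestly nonzero. Taking a two-layer piecewise-constant profile in which $c_{3312}$ alternates between $+h$ and $-h$ on equal-width layers (with all other parameters constant) yields $\overline{c_{3312}/c_{3333}}=0$ but $\overline{c_{3312}^{\,2}/c_{3333}}\neq 0$, so the clockwise $\langle c_{1212}\rangle$ differs from the counterclockwise one by a strictly negative quantity of order $h^2$. The main obstacle is not any subtle computation but the careful bookkeeping of which monoclinic Backus terms survive the Gazis projection on each branch; once the analogue of Lemma~\ref{lem:Mono} for the orthotropic projection is in hand, no new analytic machinery beyond the Backus formulas already displayed in Section~\ref{sec:AnalForm} is needed. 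Incidentally, the fact that the leading discrepancy is of order $h^2$ rather than $h$ is precisely the weak-anisotropy near-commutativity phenomenon promised in the abstract, which motivates the perturbative analysis to follow.
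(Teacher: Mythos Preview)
Your approach is essentially the same as the paper's: establish the orthotropic analogue of Lemma~\ref{lem:Mono} (this is exactly Lemma~\ref{lem:Ortho} and Corollary~\ref{col:Ortho} in the paper, proved in Appendix~\ref{AppOne2}), then compare the two paths entry-by-entry using the monoclinic Backus formulas, and observe that the residues of the zeroed parameters survive in some orthotropic entries along the clockwise path. Your concrete two-layer example with $c_{3312}=\pm h$ is a welcome addition that the paper does not spell out.

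One bookkeeping slip to fix: there are \emph{no} $c_{3312}$ cross terms in $\langle c_{1111}\rangle$, $\langle c_{1122}\rangle$, or $\langle c_{2222}\rangle$; inspect the displayed Backus formulas and you will see these involve only $c_{1133}$, $c_{2233}$, $c_{3333}$. The entries that actually differ between the two paths are $c_{1212}$ (via $c_{3312}$, as you correctly identify) and $c_{1313}$, $c_{2323}$ (via $c_{2313}$ through $D$ and $D_2$), which is what the paper records in equations~(\ref{eq:cl1})--(\ref{eq:ccl}). Your $c_{1212}$ argument alone already proves the proposition, so this does not damage the logic, but the misattributed entries should be removed.
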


To study this case, let us consider the following diagram,
\begin{equation}
\label{eq:CD}
\begin{CD}
\rm{mono}@>\rm{B}>>\rm{mono}\\
@V\mathrm{G}VV                         @VV\rm{G}V\\
\rm{ortho}@>>\rm{B}>\rm{ortho}
\end{CD}
\end{equation}
and the following lemma, whose proof is in Appendix~\ref{AppOne2}.
\begin{lemma}
\label{lem:Ortho}
For the effective orthotropic symmetry, the result of the Gazis et al.~\cite{gazis} average is tantamount to replacing each $c_{ijk\ell}$\,, in a generally an\-i\-so\-tro\-pic{---}or monoclinic---tensor, by its corresponding $c_{ijk\ell}$ of an orthotropic tensor, expressed in the natural coordinate system, including the replacements by the corresponding zeros. 
\end{lemma}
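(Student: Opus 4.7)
The plan is to follow the same template as the proof of Lemma~\ref{lem:Mono}, exploiting the fact that the orthotropic symmetry group is finite and admits an explicit set of generators. Concretely, I would identify $G^{\rm ortho}$ as the Klein four-group $\{I,R_1,R_2,R_3\}$, where $R_m$ denotes the rotation by $\pi$ about the $x_m$-axis (equivalently, the orthotropic group is generated by the three coordinate-plane reflections, but the product of any two reflections gives a $\pi$-rotation so the two descriptions yield the same group). The Gazis et al.\ integral~(\ref{eq:proj}) then reduces to the finite average
\begin{equation*}
\widetilde c^{\,\rm ortho}=\tfrac{1}{4}\bigl(c+R_1\circ c+R_2\circ c+R_3\circ c\bigr).
\end{equation*}

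The key step is to compute the action of each $R_m$ on a generic component $c_{ijk\ell}$. Since $R_m$ fixes $x_m$ and negates the other two coordinate directions, each of the four tensor indices contributes a factor of $+1$ if that index equals $m$ and a factor of $-1$ otherwise, so
\begin{equation*}
(R_m\circ c)_{ijk\ell}=(-1)^{\,4-n_m}\,c_{ijk\ell}=(-1)^{n_m}\,c_{ijk\ell},
\end{equation*}
where $n_m$ counts how many of $i,j,k,\ell$ equal $m$. Substituting into the average gives
\begin{equation*}
\widetilde c^{\,\rm ortho}_{ijk\ell}=\tfrac{1}{4}\bigl(1+(-1)^{n_1}+(-1)^{n_2}+(-1)^{n_3}\bigr)\,c_{ijk\ell}.
\end{equation*}
Since $n_1+n_2+n_3=4$ is even, the number of odd $n_m$ is either zero or two. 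In the former case the bracketed factor equals $1$; in the latter it equals $0$. Therefore $\widetilde c^{\,\rm ortho}_{ijk\ell}=c_{ijk\ell}$ whenever each of $1,2,3$ occurs an even number of times in $(i,j,k,\ell)$, and $\widetilde c^{\,\rm ortho}_{ijk\ell}=0$ otherwise.

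To close the argument I would verify that the indices with all three $n_m$ even are precisely the components $c_{1111},c_{2222},c_{3333},c_{1122},c_{1133},c_{2233},c_{2323},c_{1313},c_{1212}$ which are the standard nonzero entries of an orthotropic tensor expressed in its natural coordinate system, and that every other independent entry (for example $c_{1123}$ or $c_{1213}$) has at least one odd $n_m$ and is therefore killed. This is exactly the statement of the lemma. The argument applies verbatim when the input tensor is monoclinic: such a tensor is already $R_3$-invariant (so its components with $n_3$ odd already vanish), and the group average simply enforces the additional $R_1$-invariance, sending the remaining off-orthotropic entries (those with $n_1$ odd) to zero while preserving the rest.

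The only real care required is careful index bookkeeping, particularly in handling the Voigt versus full-index conventions and confirming that the ``each axis appears an even number of times'' criterion truly coincides with the conventional orthotropic pattern; this is the main potential source of error, but it is mechanical rather than conceptual, and once the sign formula $(-1)^{n_m}$ is in hand the remainder of the proof is a one-line parity argument.
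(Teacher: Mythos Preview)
Your argument is correct. The group you identify, $\{I,R_1,R_2,R_3\}$, is exactly the set $\{A_1^{\rm ortho},\dots,A_4^{\rm ortho}\}$ used in the paper, and your sign formula $(R_m\circ c)_{ijk\ell}=(-1)^{n_m}c_{ijk\ell}$ together with the parity observation that $n_1+n_2+n_3=4$ forces either zero or two of the $n_m$ to be odd is a clean way to obtain the averaged tensor. The identification of the surviving components with the standard orthotropic pattern is accurate, as is the remark on the monoclinic case.

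The route, however, is genuinely different from the paper's. The paper passes to Kelvin's $6\times6$ representation, writes each $A_m^{\rm ortho}$ as an explicit diagonal matrix $\tilde A_m^{\rm ortho}\in SO(6)$, and then carries out the average $\widetilde C^{\rm ortho}=\tfrac{1}{4}\sum_m \tilde A_m^{\rm ortho}\,C\,(\tilde A_m^{\rm ortho})^T$ by direct matrix multiplication, reading off the resulting pattern entry by entry. Your approach stays with the four-index tensor and replaces that computation by a one-line combinatorial parity argument. What you gain is transparency and a criterion (``each coordinate label appears an even number of times'') that explains \emph{why} exactly those nine components survive; what the paper's approach gains is that it lives in the same $6\times6$ framework used elsewhere for the Backus formulas, so no translation between Voigt/Kelvin and full-index notation is needed. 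Your closing caveat about index bookkeeping is well placed: the only thing to be careful about is that the parity criterion is applied to the full indices $ijk\ell$, not to the Voigt pair labels.
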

\noindent Lemma~\ref{lem:Ortho} entails a corollary.
\begin{corollary}
\label{col:Ortho}
For the effective orthotropic symmetry, given a generally anisotropic \hspace*{-0.05in}---or monoclinic---tensor,~$C$\,, 
\begin{equation}
\label{eq:GazisOrtho}
\widetilde{C}^{\,\rm ortho}=C^{\,\rm ortho}
\,.
\end{equation}
where $\widetilde{C}^{\,\rm ortho}$ is the Gazis et al.~\cite{gazis} average of~$C$\,, and $C^{\,\rm ortho}$ is an orthotropic tensor whose nonzero entries are the same as for~$C$\,.
\end{corollary}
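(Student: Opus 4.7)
The plan is to deduce Corollary~\ref{col:Ortho} as a direct, essentially notational consequence of Lemma~\ref{lem:Ortho}. The lemma states, in words, that the Gazis et al.\ projection onto orthotropic symmetry acts entry-by-entry on the components $c_{ijk\ell}$ written in the natural coordinate system: at every index quadruple $(i,j,k,\ell)$ for which an orthotropic tensor may carry a nonzero component, the projection leaves $c_{ijk\ell}$ unchanged, while at every quadruple for which an orthotropic tensor must vanish, the projection replaces the entry by zero.

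Given this, I would simply let $C^{\,\rm ortho}$ denote the tensor whose components coincide with those of $C$ at positions compatible with orthotropic symmetry and vanish elsewhere. By Lemma~\ref{lem:Ortho}, this is exactly the tensor produced by the Gazis et al.\ averaging, so $\widetilde{C}^{\,\rm ortho}=C^{\,\rm ortho}$, which is equation~(\ref{eq:GazisOrtho}). No further computation is required at the level of the corollary; the statement is really a repackaging of the lemma into the tensor-valued identity that will be used in the clockwise branch of Diagram~(\ref{eq:CD}).

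The genuine work, which is deferred to Appendix~\ref{AppOne2} as the proof of Lemma~\ref{lem:Ortho}, is to carry out the averaging $\int_{G^{\rm ortho}}(g\circ c)\,\mathrm{d}\mu(g)$ explicitly. The orthotropic point group is the finite group generated by the three $180^{\circ}$ rotations about the coordinate axes, so the integral in~(\ref{eq:proj}) reduces to a four-term sum. One then verifies, by tracking how each of these rotations transforms a fourth-order tensor component (each rotation changes the signs of two of the three coordinate basis vectors, and hence multiplies $c_{ijk\ell}$ by $+1$ or $-1$ according to the parity of the number of ``flipped'' indices appearing in $(i,j,k,\ell)$), that the components forced to zero by orthotropic symmetry are precisely those picking up at least one $-1$ factor and hence cancel in the average, while the remaining components receive a $+1$ from every group element and reproduce themselves. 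That sign-cancellation bookkeeping is the only real obstacle, and it is handled in the appendix; the corollary itself then drops out immediately.
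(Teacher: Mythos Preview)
Your proposal is correct and follows essentially the same route as the paper: the corollary is deduced as an immediate restatement of Lemma~\ref{lem:Ortho}, with the substantive computation---the four-term average over the orthotropic group---relegated to Appendix~\ref{AppOne2}. The only cosmetic difference is that the paper carries out that appendix computation in the $6\times 6$ Kelvin-matrix representation rather than by your direct parity-tracking on the four indices $(i,j,k,\ell)$, but the two are equivalent encodings of the same sign-cancellation argument.
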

\begin {proof} (of Proposition~\ref{thm:Two}) Let us consider a monoclinic tensor and proceed counterclockwise along the first branch of Diagram~(\ref{eq:CD}).
Using the fact that the monoclinic symmetry is a special case of general anisotropy, we invoke Corollary~\ref{col:Ortho} to conclude that $\widetilde{C}^{\,\rm ortho}=C^{\,\rm ortho}$\,,
which is equivalent to setting $c_{1112}$\,, $c_{2212}$\,, $c_{3312}$ and $c_{2313}$ to zero
in the monoclinic tensor.
We perform the upper branch of Diagram~(\ref{eq:CD}), which is the averaging of a stack of monoclinic layers to get a monoclinic equivalent medium, as in the case of the lower branch of Diagram~(\ref{eq:CD2}).
Thus, following the clockwise path, we obtain
\begin{equation}
c_{1212}^\circlearrowright=
\overline{c_{1212}}-\overline{\left(\dfrac{c_{3312}^2}{c_{3333}}\right)}+
\overline{\left(\dfrac{1}{c_{3333}}\right)}^{\,\,-1}
\overline{\left(\dfrac{c_{3312}}{c_{3333}}\right)}^{\,2}\,,
\label{eq:cl1}
\end{equation}
\begin{equation}
c_{1313}^\circlearrowright=\dfrac{\overline{\left(\dfrac{c_{1313}}{D}\right)}}{2D_2}\,,\qquad
c_{2323}^\circlearrowright=\dfrac{\overline{\left(\dfrac{c_{2323}}{D}\right)}}{2D_2}\,.
\label{eq:cl2}
\end{equation}
Following the counterclockwise path, we obtain
\begin{equation}
c_{1212}^\circlearrowleft=\overline{c_{1212}}\,,\quad
c_{1313}^\circlearrowleft=\overline{\left(\dfrac{1}{c_{1313}}\right)}^{\,\,-1}\,,\quad
c_{2323}^\circlearrowleft=\overline{\left(\dfrac{1}{c_{2323}}\right)}^{\,\,-1}\,.
\label{eq:ccl}
\end{equation}
The other entries are the same for both paths.

In conclusion, the results of the clockwise and  counterclockwise paths are the same if $c_{2313}=c_{3312}=0$\,, which is a special case of monoclinic symmetry.
Thus, the Backus~\cite{backus} average and Gazis et al.~\cite{gazis} average commute for that case, even though they do not in general. 
\end{proof}

Now, let us consider the case of weak anisotropy, in which $c_{2313}$ and $c_{3312}$\,,
which are zero for isotropy,  are small.
To study the commutativity of the two averages, consider the commutator, $\mathscr{C}=[B,G]=BG-GB$\,, where $BG$ is the clockwise path and $GB$ is the counterclockwise path.
Since---if $c_{2313}=c_{3312}=0$---the commutator is zero, it is to be expected that in a neighbourhood of this case we have near commutativity.
Specifically, if both  $c_{2313}$ and $c_{3312}$ are of order $\epsilon$\,, then $\mathscr{C}$ should also be of order $\epsilon$\,, which means that there is near commutativity up to this order.
However, remarkably, a much stronger statement is true.
It turns out that for  $c_{2313}$ and $c_{3312}$ of order $\epsilon$\,, $\mathscr{C}$ is of order $\epsilon^2$\,, thus indicating a much stronger near commutativity that could expected {\it a priori}.
This follows from the following Jacobian calculation.

In this case, $\mathscr{C}=[\mathscr{C}_1,\mathscr{C}_2,\mathscr{C}_3]$\,, where
\begin{equation*}
\mathscr{C}_1=c_{2323}^\circlearrowright-c_{2323}^\circlearrowleft
=\dfrac{\overline{\left(\dfrac{c_{2323}}{D}\right)}}{2D_2}
-
\overline{\left(\dfrac{1}{c_{2323}}\right)}^{\,\,-1}\,,
\end{equation*}
\begin{equation*}
\mathscr{C}_2=c_{1313}^\circlearrowright-c_{1313}^\circlearrowleft
=\dfrac{\overline{\left(\dfrac{c_{1313}}{D}\right)}}{2D_2}
-
\overline{\left(\dfrac{1}{c_{1313}}\right)}^{\,\,-1}
\end{equation*}
and
\begin{equation*}
\mathscr{C}_3=
c_{1212}^\circlearrowright-c_{1212}^\circlearrowleft
=
\overline{\left(\dfrac{1}{c_{3333}}\right)}^{\,\,-1}
\overline{\left(\dfrac{c_{3312}}{c_{3333}}\right)}^{\,2}
-\overline{\left(\dfrac{c_{3312}^2}{c_{3333}}\right)}\,.
\end{equation*}

The starting parameters are
\begin{equation*}
x=c_{3333}^i\,,c_{2323}^i\,,c_{1313}^i\,,c_{2313}^i\,,c_{3312}^i\,,\quad i=1,\ldots,n\,,
\end{equation*}
and we have commutativity if
\begin{equation*}
c_{2313}^i=c_{3312}^i=0\,,\quad  i=1,\ldots,n\,,
\end{equation*}
which we denote by $x=a$\,, such that $\mathscr{C}(a)=[0]$.

Let the average be the arithmetic average and assume that all layers have the same thickness, so that
\begin{equation*}
\overline{F}=\dfrac{1}{n}\sum_{i=1}^n F^i\,.
\end{equation*}
Also, we let the $3\times 5n$ Jacobian matrix be
\begin{equation*}
\mathscr{C}'(x)=\left[\dfrac{\partial\mathscr{C}}{\partial x}\right]\,.
\end{equation*}
In Appendix~\ref{app:jacobian} we evaluate this Jacobian and find that $\mathscr{C}'(a)=[0]$\,.

If we expand $\mathscr{C}(x)$ in a Taylor series,
\begin{equation}
\label{eq:TaylorCommutator}
\mathscr{C}(x)=\mathscr{C}(a)+\mathscr{C}'(a)(x-a) +\cdots=\mathscr{C}'(a)(x-a) +\cdots\,,
\end{equation}
then we see that, near $x=a$\,, $||\mathscr{C}(x)||=O\left(||x-a||^2\right)$\,, so that  there is very near  commutativity in a neighbourhood of $x=a$\,. 
In Section~\ref{sec:num} we illustrate numerically this strong near commutativity.
\subsection{Orthotropic layers and tetragonal medium}
\label{sec:ortho}
In a manner analogous to Diagram~(\ref{eq:CD}), but proceeding from the the upper-left-hand corner orthotropic tensor to lower-right-hand corner tetragonal tensor by the counterclockwise path,
\begin{equation}
\label{eq:CD3}
\begin{CD}
\rm{ortho}@>\rm{B}>>\rm{ortho}\\
@V\mathrm{G}VV                         @VV\rm{G}V\\
\rm{tetra}@>>\rm{B}>\rm{tetra}
\end{CD}
\end{equation}
we obtain
\begin{equation*}
c_{1111}^\circlearrowleft=\overline{\dfrac{c_{1111}+c_{2222}}{2}-
\dfrac{\left(\dfrac{c_{1111}+c_{2222}}{2}\right)^2}{c_{3333}}}+
\overline{\left(\dfrac{c_{1111}+c_{2222}}{2c_{3333}}\right)}^{\,2}
\,\overline{\left(\dfrac{1}{c_{3333}}\right)}^{\,\,-1}
\,.
\end{equation*}
Following the clockwise path, we obtain
\begin{equation*}
c_{1111}^\circlearrowright=\overline{\dfrac{c_{1111}+c_{2222}}{2}-
\dfrac{c_{1133}^2+c_{2233}^2}{2c_{3333}}}+
\dfrac{1}{2}\left[\overline{\left(\dfrac{c_{1133}}{c_{3333}}\right)}^{\,2}+
\overline{\left(\dfrac{c_{2233}}{c_{3333}}\right)}^{\,2}\right]
\overline{\left(\dfrac{1}{c_{3333}}\right)}^{\,\,-1}\,.
\end{equation*}
These results are not equal to one another, unless $c_{1133}=c_{2233}$\,, which is a special case of orthotropic symmetry.  The same is true for $c_{1122}^\circlearrowleft$ and $c_{1122}^\circlearrowright$.   
Also, $c_{2323}$ must equal $c_{1313}$ for $c_{2323}^\circlearrowright=c_{2323}^\circlearrowleft$.
The other entries are the same for both paths.
Thus, the Backus~\cite{backus} average and Gazis et al.~\cite{gazis} average do commute for $c_{1133}=c_{2233}$ and $c_{2323}=c_{1313}$\,, which is a special case of orthotropic symmetry, but they do not commute in general.

Similarly to our discussion in Section~\ref{sec:mono} and Appendix~\ref{app:jacobian}, we
examine the commutator.
Herein, the commutator is $\mathscr{C}=[\mathscr{C}_1,\mathscr{C}_2,\mathscr{C}_3]$\,, where
\begin{equation*}
\mathscr{C}_1=c_{1111}^\circlearrowright-c_{1111}^\circlearrowleft\,,
\quad
\mathscr{C}_2=c_{1122}^\circlearrowright-c_{1122}^\circlearrowleft\,,
\quad
\mathscr{C}_3=c_{2323}^\circlearrowright-c_{2323}^\circlearrowleft\,.
\end{equation*}
The starting parameters that show up in the commutator are
\begin{equation*}
x=c_{1133}^i\,,c_{2233}^i\,,c_{3333}^i\,,c_{2323}^i\,,c_{1313}^i\,,\quad i=1,\ldots,n\,,
\end{equation*}
and we have commutativity if
\begin{equation*}
c_{1133}^i=c_{2233}^i\quad{\rm and}\quad c_{2323}^i=c_{1313}^i\,,\quad  i=1,\ldots,n\,,
\end{equation*}
which we denote by $x=a$\,, such that $\mathscr{C}(a)=[0]$.

Again, as in Section~\ref{sec:mono}, we let the $3\times 5n$ Jacobian matrix be
\begin{equation*}
\mathscr{C}'(x)=\left[\dfrac{\partial\mathscr{C}}{\partial x}\right]\,.
\end{equation*}
In a series of calculations similar to those
in Appendix~\ref{app:jacobian} we evaluate this Jacobian and again find that $\mathscr{C}'(a)=[0]$\,.

Let us also examine the process of combining the Gazis et al.~\cite{gazis} averages, which is tantamount to combining Diagrams~(\ref{eq:CD}) and~(\ref{eq:CD3}),
\begin{equation}
\begin{CD}
\label{eq:CD4}
\rm{mono}@>\rm{B}>>\rm{mono}\\
@V\mathrm{G}VV                         @VV\rm{G}V\\
\rm{ortho}@>>\rm{B}>\rm{ortho}\\
@V\mathrm{G}VV                         @VV\rm{G}V\\
\rm{tetra}@>>\rm{B}>\rm{tetra}
\end{CD}
\end{equation}
In accordance with Theorem~\ref{thm:One}, in general, there is no commutativity.
However, the outcomes are the same as for the corresponding steps in Sections~\ref{sec:mono} and \ref{sec:ortho}.
In general, for the Gazis et al.~\cite{gazis} average, proceeding directly, $\rm{aniso}\xrightarrow{\rm{G}}\rm{iso}$\,, is tantamount to proceeding along arrows in Figure~\ref{fig:orderrelation}, $\rm{aniso}\xrightarrow{\rm{G}}\cdots\xrightarrow{\rm{G}}\rm{iso}$\,.
No such combining of the Backus~\cite{backus} averages is possible, since, for each step, layers become a homogeneous medium.
\subsection{Transversely isotropic layers}
Lack of commutativity between the two averages can be also exemplified by the case of transversely isotropic layers.
Following the clockwise path of Diagram~(\ref{eq:CD}), the Backus~\cite{backus} average results in a transversely isotropic medium, whose Gazis et al.~\cite{gazis} average---in accordance with Figure~\ref{fig:orderrelation}---is isotropic.
Following the counterclockwise path, Gazis et al.~\cite{gazis} average results in an isotropic medium, whose Backus~\cite{backus} average, however, is transverse isotropy.
Thus, not only the elasticity parameters, but even the resulting material-symmetry classes differ.

Also, we could---in a manner analogous to the one illustrated in Diagram~(\ref{eq:CD4})\,---begin with generally anisotropic layers and obtain isotropy by the clockwise path and transverse isotropy by the counterclockwise path, which again illustrates noncommutativity.
\section{Numerical examination}
\label{sec:num}
\subsection{Introduction}
In this section, we study numerically the extent of the lack of commutativity between the Backus~\cite{backus} and Gazis et al.~\cite{gazis} averages.
Also, we examine the effect of the strength of the anisotropy on noncommutativity.

We are once again dealing with Diagram~(\ref{eq:CD}).   Herein, $\rm B$ and $\rm G$ stand for the Backus~\cite{backus} average and the Gazis~et~al.~\cite{gazis} average, respectively.
The upper left-hand corner of Diagram~(\ref{eq:CD}) is a series of parallel monoclinic layers.
The lower right-hand corner is a single orthotropic medium.
The intermediate clockwise result is a single monoclinic tensor: an equivalent medium; the intermediate counterclockwise result is a series of parallel orthotropic layers: effective media.

As discussed in Section~\ref{sec:AnalForm}, even though, in general, the Backus~\cite{backus} average and the Gazis~et~al.~\cite{gazis} average do not commute, except in particular cases, it is important to consider the extent of their noncommutativity.
In other words, we enquire to what extent---in the context of a continuum-mechanics model and unavoidable measurement errors---the averages could be considered as approximately commutative.

To do so, we numerically examine two cases.
In one case, we begin---in the upper left-hand corner of Diagram~(\ref{eq:CD})---with ten strongly anisotropic layers.
In the other case, we begin with ten weakly anisotropic layers.
\subsection{Monoclinic layers and orthotropic medium}
\label{sub:MonoOrtho}
The elasticity parameters of the strongly anisotropic layers are derived by random variation of a feldspar given by Waeselmann et~al.~\cite{waeselmann}.
For consistency, we express these parameters in the natural coordinate system whose $x_3$-axis is perpendicular to the symmetry plane, as opposed to the~$x_2$-axis used by Waeselmann et~al.~\cite{waeselmann}.
These parameters are given in Table~\ref{tab:strong}.   

\begin{table}[H]
\caption{Ten strongly anisotropic monoclinic tensors.  The elasticity parameters are density-scaled; their units are $10^6~{\rm m}^2/{\rm s}^2$\,.}
\label{tab:strong}
{\normalsize
\setlength{\tabcolsep}{4pt}
\begin{tabular}{|c|c|c|c|c|c|c|c|c|c|c|c|c|c|}
\hline
layer&$c_{1111}$&$c_{1122}$&$c_{1133}$&$c_{1112}$&$c_{2222}$&$c_{2233}$&
$c_{2212}$&$c_{3333}$&$c_{3312}$&$c_{2323}$&$c_{2313}$&$c_{1313}$&$c_{1212}$\\
\hline
1&23.9&11.6&12.2&1.53&71.4&6.64&2.94&52.0&-2.89&8.00&-6.79&8.21&4.54\\
2&33.5&8.24&12.2&-0.98&66.9&5.65&2.02&82.3&-1.12&6.35&-5.16&17.4&7.36\\
3&33.2&9.79&16.9&0.57&62.1&6.19&3.81&83.4&-7.34&10.2&-2.33&16.6&4.72\\
4&38.1&8.33&12.2&1.51&55.0&4.87&3.11&56.8&-1.43&4.10&-0.20&8.25&11.2\\
5&37.4&11.5&14.4&-0.79&72.6&3.93&3.00&76.5&-6.07&9.58&-4.38&14.8&8.70\\
6&38.4&10.7&17.1&1.55&63.8&7.11&1.99&55.2&-0.98&9.66&-6.85&11.1&11.4\\
7&29.2&11.4&11.7&0.59&59.5&5.23&3.74&82.7&-3.81&10.1&-5.09&9.78&6.89\\
8&31.9&9.03&19.1&-0.07&71.6&4.18&1.98&70.4&-0.25&4.84&-0.33&8.21&10.9\\
9&37.5&10.5&19.4&0.37&76.7&5.02&3.57&76.7&-0.16&7.84&-1.62&13.8&10.7\\
10&36.0&9.65&18.9&-0.43&73.1&3.94&2.53&60.4&-7.20&5.44&-2.20&9.25&5.20\\
\hline
\end{tabular}}
\end{table}

The elasticity parameters of the weakly anisotropic layers are derived from the strongly anisotropic ones by keeping $c_{1111}$ and $c_{2323}$\,, which are the two distinct elasticity parameters of isotropy, approximately the same as for the corresponding strongly anisotropic layers, and by varying slightly other parameters away from isotropy.
These parameters are given in Table~\ref{tab:weak}.

\begin{table}[H]
\caption{Ten weakly anisotropic monoclinic tensors. The elasticity parameters are density-scaled; their units are $10^6~{\rm m}^2/{\rm s}^2$\,.}
\label{tab:weak}
{\normalsize
\setlength{\tabcolsep}{4pt}
\begin{tabular}{|c|c|c|c|c|c|c|c|c|c|c|c|c|c|}
\hline
layer&$c_{1111}$&$c_{1122}$&$c_{1133}$&$c_{1112}$&$c_{2222}$&$c_{2233}$&
$c_{2212}$&$c_{3333}$&$c_{3312}$&$c_{2323}$&$c_{2313}$&$c_{1313}$&$c_{1212}$\\
\hline
1&24&9&9&0.2&29&7&0.3&27&-0.3&8&-1&8.2&7\\
2&34&15&18&-0.1&38&14&0.2&39&-0.1&6&-1&7.5&6.5\\
3&33&12&14&0.06&37&10&0.4&38&-0.7&10&-0.5&12&8.5\\
4&38&20&22&0.15&40&15&0.3&41&-0.1&4&-0.2&5&6\\
5&37&14&16&-0.08&42&10&0.3&41&-0.6&10&-0.8&11&9\\
6&38&15&18&0.16&41&14&0.2&40&-0.1&10&-1&10.5&11\\
7&29&9.5&9.5&0.06&32&8&0.4&34&-0.4&10&-0.8&10&9\\
8&32&15&19.5&-0.01&36&13&0.2&36&-0.03&5&-0.3&6&6\\
9&38&16&20&0.04&43&14&0.4&42&-0.02&8&-0.4&9&9\\
10&36&18&23&-0.04&40&15&0.3&39&-0.7&5&-0.5&6&5\\
\hline
\end{tabular}}
\end{table}

Assuming that all layers have the same thickness, we use an arithmetic average for the Backus~\cite{backus} averaging; for instance,
\begin{equation*}
\overline{c_{1212}}=\dfrac{1}{10}\sum_{i=1}^{10} c_{1212}^i\,.
\end{equation*}

The results of the clockwise and counterclockwise paths 
for the three elasticity parameters that differ from each other are 
calculated from Equations~(\ref{eq:cl1}), (\ref{eq:cl2}) and~(\ref{eq:ccl}), and given in Table~\ref{tab:num}.
It appears that the averages nearly commute for the case of weak anisotropy.
Hence, we confirm, as discussed in Section~\ref{sec:mono}, that the extent of noncommutativity is a function of the strength of anisotropy.

\begin{table}[H]
\caption{Comparison of numerical results.}
\label{tab:num}
\centerline{
\begin{tabular}{||c||c|c||c|c||c|c||}
\hline
&&&&&&\\[-10pt]
anisotropy&$c_{1212}^\circlearrowright$&$c_{1212}^\circlearrowleft$&$c_{1313}^\circlearrowright$&$c_{1313}^\circlearrowleft$&$c_{2323}^\circlearrowright$&$c_{2323}^\circlearrowleft$\\
\hline
strong&8.06&8.16&9.13&10.84&6.36&6.90\\
weak&7.70&7.70&7.88&7.87&6.82&6.81\\
\hline
\end{tabular}
}
\end{table}

To ensure that our calculation of the Jacobian being zero is correct, as obtained in Section~\ref{sec:mono} and Appendix~\ref{app:jacobian}, we perform another test. 
We multiply the weakly anisotropic values of $c_{2313}^i$ and  $c_{3312}^i$\,, where $i=1,\ldots,n$\,, by~$\tfrac{1}{2}$ to find that, as expected, the commutator is multiplied by~$\tfrac{1}{4}$.

To quantify the strength of anisotropy, we invoke the concept of distance in the space of elasticity tensors (Danek~et ~al.~\cite{dks1,dks2}, Kochetov and Slawinski~\cite{ks1,ks2}).
In particular, we consider the closest isotropic tensor---according to the Frobenius norm---as formulated by Voigt~\cite{voigt}.
Examining one layer from the upper left-hand corner of Diagram~(\ref{eq:CD}), we denote its weakly anisotropic tensor as $c^{\rm w}$ and its strongly anisotropic tensor as $c^{\,\rm s}$\,.

Using explicit expressions of Slawinski~\cite{slawinski3}, we find that the elasticity parameters of the closest isotropic tensor,~$c^{\rm iso_w}$\,, to $c^{\rm w}$ is $c^{\rm iso_w}_{1111}=25.52$ and $c^{\rm iso_w}_{2323}=8.307$\,.
The Frobenius distance from $c^{\rm w}$ to $c^{\rm iso_w}$ is $6.328$\,.
The closest isotropic tensor,~$c^{\rm iso_s}$, to $c^{\,\rm s}$ is $c^{\rm iso_s}_{1111}=39.08$ and $c^{\rm iso_s}_{2323}=11.94$\,.
The distance from $c^{\,\rm s}$ to $c^{\rm iso_s}$ is $49.16$\,.

Thus, as expected, $c^{\,\rm s}$\,, which represents strong anisotropy,  is much further from isotropy than $c^{\rm w}$\,, which represents weak anisotropy.
\subsection{Orthotropic layers and tetragonal medium}
To examine further the commutativity of averages, we generate ten weakly anisotropic orthotropic tensors from the ten weakly
anisotropic monoclinic tensors by setting appropriate entries to zero.
Similarly to the weakly anisotropic case discussed in Section~\ref{sub:MonoOrtho}, we find that the Backus~\cite{backus} and Gazis et al.~\cite{gazis} averages nearly commute.

As shown in Section~\ref{sec:ortho}---for orthotropic layers and a tetragonal medium---there is commutativity only if
$c_{1133}^i=c_{2233}^i$ and $c_{2323}^i=c_{1313}^i$\,, which,  in this case, corresponds to $x=a$ in expression~(\ref{eq:TaylorCommutator}).   

If we multiply the difference between the weakly anisotropic values of $c_{1133}^i$ and $c_{2233}^i$ as well as that between  $c_{2323}^i$ and $c_{1313}^i$ by a factor of $F$\,, we find that $\mathscr{C}$ is multiplied by approximately a factor of $F^2$.
The factors of $F$ used in these examination are $\tfrac{1}{2}$\,, $\tfrac{1}{3}$\,, $\tfrac{1}{4}$ and
$\tfrac{1}{10}$\,, with nearly exact values of $F^2$ for $\mathscr{C}_1$ and $\mathscr{C}_2$ and a close value for~$\mathscr{C}_3$. 
Thus, again, if the differences are of order $\epsilon$\,, the commutator is of order $\epsilon^2$.
\section{Discussion}
We conclude that---in general---the Backus~\cite{backus} average, which is a spatial average over an inhomogeneity, and the Gazis et~al.~\cite{gazis} average, which is an average over an anisotropic symmetry group at a point, do not commute. 
Mathematically, this noncommutativity is stated by Pro\-position~\ref{thm:One}.
Also, it is exemplified for several material symmetries.

There are, however, particular cases of given material symmetries for which the averaging processes commute, as discussed in Sections~\ref{sec:mono} and \ref{sec:ortho}.
Yet, we do not see a physical explanation for the commutativity in these special cases, which is consistent with the view that a mathematical realm---even though it allows us to formulate quantitative analogies for the physical world---has no causal connection with it.

Using the the case of monoclinic and orthotropic symmetries, we numerically show that noncommutativity is a function of the strength of anisotropy.
For weak anisotropy, which is a common case of seismological studies, the averages nearly commute.
Furthermore, and perhaps surprisingly, a perturbation of the elasticity parameters about a point of weak anisotropy results in the commutator of the two types of averaging being of the order of the square of this perturbation.

For theoretical seismology, which is our motivation, weak anisotropy is
adequate for most cases; hence, this near commutativity is welcome. In other words,
the fact that the order of a sequence of these two averages is nearly
indistinguishable is important information.

In this study---for convenience and without appreciable loss of generality---we assume that all tensors are expressed in the same orientation of their coordinate systems.
Otherwise, the process of averaging become more complicated, as discussed---for the Gazis et al.~\cite{gazis} average---by Kochetov and Slawinski~\cite{ks1,ks2} and as mentioned---for the Backus~\cite{backus} average---by Bos et al.~\cite{bos}.
\section*{Acknowledgements}
We wish to acknowledge discussions with Theodore Stanoev.
The numerical examination was motivated by a discussion with Robert Sarracino.
This research was performed in the context of The Geomechanics Project supported by Husky Energy.
Also, this research was partially supported by the Natural Sciences and Engineering Research Council of Canada, grant 238416-2013.
\bibliographystyle{unsrt}
\bibliography{commutative}
\setcounter{section}{0}
\renewcommand{\thesection}{\Alph{section}}
\section{Proofs of Lemmas}
\subsection{Lemma~\ref{lem:Mono}}\label{AppOne1}
\begin{proof}
For discrete symmetries, we can write integral~(\ref{eq:proj}) as a sum,
\begin{equation}
\label{eq:AverageDisc}
\widetilde C^{\,\rm sym}=\dfrac{1}{n}\left(\tilde{A}_1^{\rm sym}\,C\,\tilde{A}_1^{\rm sym}\,{}^{^T}+\ldots+\tilde{A}_n^{\rm sym}\,C\,\tilde{A}_n^{\rm sym}\,{}^{^T}\right)
\,,
\end{equation}
where $\widetilde C^{\rm sym}$ is expressed in Kelvin's notation, in view
of  Thomson~\cite[p.~110]{thomson}, as discussed in Chapman~\cite[Section~4.4.2]{chapman}.

To write the elements of the monoclinic symmetry group as $6\times 6$ matrices, we must consider orthogonal transformations in $\mathbb{R}^3$\,.
Transformation $A\in SO(3)$ of $c_{ijk\ell}$ corresponds to transformation of $C$ given by
\begin{align}
\tilde{A}&=\left[\begin{array}{cccc}
A_{11}^{2} & A_{12}^{2} & A_{13}^{2} & \sqrt{2}A_{12}A_{13}\\
A_{21}^{2} & A_{22}^{2} & A_{23}^{2} & \sqrt{2}A_{22}A_{23}\\
A_{31}^{2} & A_{32}^{2} & A_{33}^{2} & \sqrt{2}A_{32}A_{33}\\
\sqrt{2}A_{21}A_{31} & \sqrt{2}A_{22}A_{32} & \sqrt{2}A_{23}A_{33} & A_{23}A_{32}+A_{22}A_{33}\\
\sqrt{2}A_{11}A_{31} & \sqrt{2}A_{12}A_{32} & \sqrt{2}A_{13}A_{33} & A_{13}A_{32}+A_{12}A_{33}\\
\sqrt{2}A_{11}A_{21} & \sqrt{2}A_{12}A_{22} & \sqrt{2}A_{13}A_{23} & A_{13}A_{22}+A_{12}A_{23}\end{array}\right.\label{eq:ATildeQ}
\\
&\hspace*{1.5in}
\left.
\begin{array}{cc}
\sqrt{2}A_{11}A_{13} & \sqrt{2}A_{11}A_{12}\\
\sqrt{2}A_{21}A_{23} & \sqrt{2}A_{21}A_{22}\\
\sqrt{2}A_{31}A_{33} & \sqrt{2}A_{31}A_{32}\\
A_{23}A_{31}+A_{21}A_{33} & A_{22}A_{31}+A_{21}A_{32}\\
A_{13}A_{31}+A_{11}A_{33} & A_{12}A_{31}+A_{11}A_{32}\\
A_{13}A_{21}+A_{11}A_{23} & A_{12}A_{21}+A_{11}A_{22}
\end{array}
\right]
\,,\nonumber
\end{align}
which is an orthogonal matrix, $\tilde{A}\in SO(6)$ (Slawinski~\cite[Section~5.2.5]{slawinski1}).\footnote{Readers interested in formulation of matrix~(\ref{eq:ATildeQ}) might also refer to  B\'ona et al.~\cite{bona}.}

The required symmetry-group elements are
\begin{equation*}
 A_1^{\rm mono}=
\left[
\begin{array}{ccc}
1 & 0 & 0\\
0 & 1 & 0\\
0 & 0 & 1\end{array}\right]
\mapsto
\left[\begin{array}{cccccc}
1 & 0 & 0 & 0 & 0 & 0\\
0 & 1 & 0 & 0 & 0 & 0\\
0 & 0 & 1 & 0 & 0 & 0\\
0 & 0 & 0 & 1 & 0 & 0\\
0 & 0 & 0 & 0 & 1 & 0\\
0 & 0 & 0 & 0 & 0 & 1\end{array}\right]
=\tilde{A}_1^{\rm mono}
\end{equation*}
and
\begin{equation*}
 A_2^{\rm mono}=
\left[
\begin{array}{ccc}
-1 & 0 & 0\\
0 & -1 & 0\\
0 & 0 & 1\end{array}\right]
\mapsto
\left[\begin{array}{cccccc}
1 & 0 & 0 & 0 & 0 & 0\\
0 & 1 & 0 & 0 & 0 & 0\\
0 & 0 & 1 & 0 & 0 & 0\\
0 & 0 & 0 & -1 & 0 & 0\\
0 & 0 & 0 & 0 & -1 & 0\\
0 & 0 & 0 & 0 & 0 & 1\end{array}\right]
=\tilde{A}_2^{\rm mono}
\,.
\end{equation*}
For the monoclinic case, expression~(\ref{eq:AverageDisc}) can be stated explicitly as
\begin{equation*}
\widetilde C^{\rm mono}=
\dfrac{\left(\tilde{A}_1^{\rm mono}\right)\,C\,\left(\tilde{A}_1^{\rm mono}\right)^T+\left(\tilde{A}_2^{\rm mono}\right)\,C\,\left(\tilde{A}_2^{\rm mono}\right)^T}{2}
\,.
\end{equation*}
Performing matrix operations, we obtain
\begin{equation}
\widetilde C^{\rm mono}
=\left[\begin{array}{cccccc}
c_{1111} & c_{1122} & c_{1133} & 0 & 0 & \sqrt{2}c_{1112}\\
c_{1122} & c_{2222} & c_{2233} & 0 & 0 & \sqrt{2}c_{2212}\\
c_{1133} & c_{2233} & c_{3333} & 0 & 0 & \sqrt{2}c_{3312}\\
0 & 0 & 0 & 2c_{2323} & 2c_{2313} & 0\\
0 & 0 & 0 & 2c_{2313} & 2c_{1313} & 0\\
\sqrt{2}c_{1112} & \sqrt{2}c_{2212} & \sqrt{2}c_{3312} & 0 & 0 & 2c_{1212}
\end{array}\right]
\,,
\label{eq:MonoExplicitRef}
\end{equation}
which exhibits the form of the monoclinic tensor in its natural coordinate system.
In other words, $\widetilde{C}^{\rm mono}=C^{\rm mono}$\,, in accordance with Corollary~\ref{col:Mono}.
\end{proof}
\subsection{Lemma~\ref{lem:Ortho}}\label{AppOne2}
\begin{proof}
For orthotropic symmetry, \par\noindent $\tilde{A}_1^{\rm ortho}=\tilde{A}_1^{\rm mono}$\,, ${\tilde{A}_2^{\rm ortho}=\tilde{A}_2^{\rm mono}}$\,,
\begin{equation*}
 A_3^{\rm ortho}=
\left[
\begin{array}{ccc}
-1 & 0 & 0\\
0 &  1 & 0\\
0 & 0 & -1\end{array}\right]
\mapsto
\left[\begin{array}{cccccc}
1 & 0 & 0 & 0 & 0 & 0\\
0 & 1 & 0 & 0 & 0 & 0\\
0 & 0 & 1 & 0 & 0 & 0\\
0 & 0 & 0 & -1 & 0 & 0\\
0 & 0 & 0 & 0 &  1 & 0\\
0 & 0 & 0 & 0 & 0 & -1\end{array}\right]
=\tilde{A}_3^{\rm ortho}
\,,
\end{equation*}
and
\begin{equation*}
 A_4^{\rm ortho}=
\left[
\begin{array}{ccc}
1 & 0 & 0\\
0 &  -1 & 0\\
0 & 0 & -1\end{array}\right]
\mapsto
\left[\begin{array}{cccccc}
1 & 0 & 0 & 0 & 0 & 0\\
0 & 1 & 0 & 0 & 0 & 0\\
0 & 0 & 1 & 0 & 0 & 0\\
0 & 0 & 0 & 1 & 0 & 0\\
0 & 0 & 0 & 0 &  -1 & 0\\
0 & 0 & 0 & 0 & 0 & -1\end{array}\right]
=\tilde{A}_4^{\rm ortho}
\,.
\end{equation*}
For the orthotropic case, expression~(\ref{eq:AverageDisc}) can be stated explicitly as
\begin{align*}
\widetilde C^{\rm ortho}=&
\left[\left(\tilde{A}_1^{\rm ortho}\right)\,C\,\left(\tilde{A}_1^{\rm ortho}\right)^T
+\left(\tilde{A}_2^{\rm ortho}\right)\,C\,\left(\tilde{A}_2^{\rm ortho}\right)^T\right.\\
&+\left.\left(\tilde{A}_3^{\rm ortho}\right)\,C\,\left(\tilde{A}_3^{\rm ortho}\right)^T+\left(\tilde{A}_4^{\rm ortho}\right)\,C\,\left(\tilde{A}_4^{\rm ortho}\right)^T\right]/4
\,.
\end{align*}
Performing matrix operations, we obtain
\begin{equation}
\widetilde C^{\rm ortho}
=\left[\begin{array}{cccccc}
c_{1111} & c_{1122} & c_{1133} & 0 & 0 & 0\\
c_{1122} & c_{2222} & c_{2233} & 0 & 0 & 0\\
c_{1133} & c_{2233} & c_{3333} & 0 & 0 &0\\
0 & 0 & 0 & 2c_{2323} & 0 & 0\\
0 & 0 & 0 & 0 & 2c_{1313} & 0\\
0& 0 & 0 & 0 & 0 & 2c_{1212}
\end{array}\right]
\,,
\label{eq:OrthoExplicitRef}
\end{equation}
which exhibits the form of the orthotropic tensor in its natural coordinate system.
In other words, $\widetilde{C}^{\rm ortho}=C^{\rm ortho}$\,, in accordance with Corollary~\ref{col:Ortho}.
\end{proof}
\section{Evaluation of Jacobian}
\label{app:jacobian}
\begin{equation*}
\mathscr{C}_3=\left[\dfrac{1}{n}\sum\limits_{i=1}^n\dfrac{1}{c_{3333}^i}\right]^{-1}
\left[\dfrac{1}{n}\sum\limits_{i=1}^n\dfrac{c_{3312}^i}{c_{3333}^i}\right]^{2}
-
\left[\dfrac{1}{n}\sum\limits_{i=1}^n\dfrac{(c_{3312}^i)^2}{c_{3333}^i}\right]\,.
\end{equation*}
\begin{equation*}
\dfrac{\partial \mathscr{C}_3}{\partial c_{2323}^{\,j}}=\dfrac{\partial \mathscr{C}_3}{\partial c_{1313}^{\,j}}
=\dfrac{\partial \mathscr{C}_3}{\partial c_{2313}^{\,j}}=0\,.
\end{equation*}
\begin{equation*}
\dfrac{\partial \mathscr{C}_3}{\partial c_{3312}^{\,j}}=2\left[\dfrac{1}{n}\sum\limits_{i=1}^n\dfrac{1}{c_{3333}^i}\right]^{-1}\left[\dfrac{1}{n}\sum\limits_{i=1}^n\dfrac{c_{3312}^i}{c_{3333}^i}\right]\left(\dfrac{1}{n}\right)\left(\dfrac{1}{c_{3333}^{\,j}}\right)-\dfrac{2}{n}\dfrac{c_{3312}^{\,j}}{c_{3333}^{\,j}}\,.
\end{equation*}
\begin{align*}
\dfrac{\partial \mathscr{C}_3}{\partial c_{3333}^{\,j}}&=-\left[\dfrac{1}{n}\sum\limits_{i=1}^n\dfrac{1}{c_{3333}^i}\right]^{-2}
\left[\dfrac{1}{n}\left(\dfrac{-1}{(c_{3333}^{\,j})^2}\right)\right]\left[\dfrac{1}{n}\sum\limits_{i=1}^n\dfrac{c_{3312}^i}{c_{3333}^i}\right]^{2}\\
&\quad+
\left[\dfrac{1}{n}\sum\limits_{i=1}^n\dfrac{1}{c_{3333}^i}\right]^{-1}\left[\dfrac{2}{n}\sum\limits_{i=1}^n\dfrac{c_{3312}^i}{c_{3333}^i}\right]\left[\dfrac{1}{n}\left(\dfrac{-c_{3312}^{\,j}}{(c_{3333}^{\,j})^2}\right)\right]
+
\dfrac{1}{n}\dfrac{\left(c_{3312}^{\,j}\right)^2}{\left(c_{3333}^{\,j}\right)^2}
\,.
\end{align*}
Examining the above two equations---where for $x=a$\,, $c_{2313}^{\,j}=c_{3312}^{\,j}=0$\,, with $j=1,\ldots,n$---we see that
\begin{equation*}
\left.\dfrac{\partial \mathscr{C}_3}{\partial c_{3312}^{\,j}}\right|_{x=a}
=\left.\dfrac{\partial \mathscr{C}_3}{\partial c_{3333}^{\,j}}\right|_{x=a}=0\,.
\end{equation*}
Next, let us examine $\mathscr{C}_1$ and $\mathscr{C}_2$\,.
First, note that
\begin{equation*}
\dfrac{\partial \mathscr{C}_1}{\partial c_{3333}^{\,j}}=\dfrac{\partial \mathscr{C}_2}{\partial c_{3333}^{\,j}}
=\dfrac{\partial \mathscr{C}_1}{\partial c_{3312}^{\,j}}=\dfrac{\partial \mathscr{C}_2}{\partial c_{3312}^{\,j}}=0\,.
\end{equation*}
We let
\begin{equation*}
f=\dfrac{1}{n}\sum\limits_{i=1}^n\dfrac{c_{2323}^i}{2\left(c_{2323}^i c_{1313}^i-\left[c_{2313}^i\right]^2\right)}\,,
\end{equation*}
\begin{equation*}
g=\dfrac{1}{n}\sum\limits_{i=1}^n\dfrac{c_{1313}^i}{2\left(c_{2323}^i c_{1313}^i-\left[c_{2313}^i\right]^2\right)}
\end{equation*}
and
\begin{equation*}
h=\dfrac{1}{n}\sum\limits_{i=1}^n\dfrac{c_{2313}^i}{2\left(c_{2323}^i c_{1313}^i-\left[c_{2313}^i\right]^2\right)}\,.
\end{equation*}
which leads to
\begin{equation*}
\mathscr{C}_1=\dfrac{f}{2\left[fg-h^2\right]}-\left(\dfrac{1}{n}\sum\limits_{i=1}^n\dfrac{1}{c_{2323}^i}\right)^{-1}\,,
\end{equation*}
Thus,
\begin{align*}
\dfrac{\partial \mathscr{C}_1}{\partial c_{2323}^{\,j}}=\dfrac{\partial f}{\partial c_{2323}^{\,j}}\dfrac{1}
{2\left[fg-h^2\right]}&-\dfrac{f}{2}\left[fg-h^2\right]^{-2}\left[g\dfrac{\partial f}{\partial c_{2323}^{\,j}}+
f\dfrac{\partial g}{\partial c_{2323}^{\,j}}-2h\dfrac{\partial h}{\partial c_{2323}^{\,j}}\right]\\
&+\left(\dfrac{1}{n}\sum\limits_{i=1}^n\dfrac{1}{c_{2323}^i}\right)^{-2}\dfrac{1}{n}\dfrac{-1}{\left[c_{2323}^{\,j}\right]^2}\,,
\end{align*}
\begin{equation*}
\dfrac{\partial f}{\partial c_{2323}^{\,j}}=
\dfrac{1}{n}\dfrac{1}{2\left(c_{2323}^{\,j} c_{1313}^{\,j} -\left [c_{2313}^{\,j}\right]^2\right)}
-\dfrac{c_{2323}^{\,j}}{2n}c_{1313}^{\,j}\left(c_{2323}^{\,j} c_{1313}^{\,j} - \left[c_{2313}^{\,j}\right]^2\right)^{-2}\,,
\end{equation*}
\begin{equation*}
\dfrac{\partial g}{\partial c_{2323}^{\,j}}=
\dfrac{-\left[c_{1313}^{\,j}\right]^2}{2n}\left(c_{2323}^{\,j} c_{1313}^{\,j} - \left[c_{2313}^{\,j}\right]^2\right)^{-2}\,,
\end{equation*}
\begin{equation*}
\dfrac{\partial h}{\partial c_{2323}^{\,j}}=
\dfrac{c_{2313}^{\,j}}{2n}\left(2c_{1313}^{\,j}\right)\left(c_{2323}^{\,j} c_{1313}^{\,j} - \left[c_{2313}^{\,j}\right]^2\right)^{-2}\,.
\end{equation*}
\begin{equation*}
\left.\dfrac{\partial f}{\partial c_{2323}^{\,j}}\right|_{x=a}=
\dfrac{1}{2n c_{2323}^{\,j} c_{1313}^{\,j}}
-\dfrac{c_{2323}^{\,j} c_{1313}^{\,j}}{2n\left(c_{2323}^{\,j} c_{1313}^{\,j}\right)^2}=0\,.
\end{equation*}
\begin{equation*}
\left.\dfrac{\partial g}{\partial c_{2323}^{\,j}}\right|_{x=a}=
\dfrac{-\left(c_{1313}^{\,j}\right)^2}{2n\left(c_{2323}^{\,j}\right)^2\left(c_{1313}^{\,j}\right)^2}=\dfrac{-1}{2n\left(c_{2323}^{\,j}\right)^2}\,.
\end{equation*}
\begin{equation*}
\left.\dfrac{\partial h}{\partial c_{2323}^{\,j}}\right|_{x=a}=0\,.
\end{equation*}
\begin{equation*}
\left.\dfrac{\partial \mathscr{C}_1}{\partial c_{2323}^{\,j}}\right|_{x=a}=
0-\dfrac{f}{2}\left[fg-h^2\right]^{-2}\left[0-\dfrac{f}{2n(c_{2323}^{\,j})^2}-0\right]
+\left(\dfrac{1}{n}\sum\limits_{i=1}^n \dfrac{1}{c_{2323}^i}\right)^{-2}\dfrac{1}{n}\dfrac{-1}{\left(c_{2323}^{\,j}\right)^2}\,.\end{equation*}
\begin{equation*}
\left.\dfrac{f^2}{4n\left[fg-h^2\right]^2}\right|_{x=a}=
\dfrac{\left[\dfrac{1}{n}\sum\limits_{i=1}^n \left(\dfrac{1}{2c_{1313}^i}\right)\right]^2}
{4n\left(\dfrac{1}{4n^2}\sum\limits_{i=1}^n \dfrac{1}{c_{1313}^i}\sum\limits_{i=1}^n \dfrac{1}{c_{2323}^i}\right)^2}
=\dfrac{n}{\left(\sum\limits_{i=1}^n\dfrac{1}{c_{2323}^i}\right)^2}\,.
\end{equation*}
So,
\begin{equation*}
\left.\dfrac{\partial \mathscr{C}_1}{\partial c_{2323}^{\,j}}\right|_{x=a}=
\dfrac{n}{\left(\sum\limits_{i=1}^n \dfrac{1}{c_{2323}^i}\right)^2 \left(c_{2323}^{\,j}\right)^2}
-\dfrac{n}{\left(\sum\limits_{i=1}^n \dfrac{1}{c_{2323}^i}\right)^2 \left(c_{2323}^{\,j}\right)^2}=0\,.
\end{equation*}
Similarly, by symmetry of the equations,
\begin{equation*}
\left.\dfrac{\partial \mathscr{C}_2}{\partial c_{1313}^{\,j}}\right|_{x=a}=0\,.
\end{equation*}
Next, we consider the derivative with respect to $c_{1313}^{\,j}$.
\begin{equation*}
\dfrac{\partial \mathscr{C}_1}{\partial c_{1313}^{\,j}}=\dfrac{\partial f}{\partial c_{1313}^{\,j}}\dfrac{1}
{2\left[fg-h^2\right]}-\dfrac{f}{2}\left[fg-h^2\right]^{-2}\left[g\dfrac{\partial f}{\partial c_{1313}^{\,j}}+
f\dfrac{\partial g}{\partial c_{1313}^{\,j}}-2h\dfrac{\partial h}{\partial c_{1313}^{\,j}}\right]\,.
\end{equation*}
\begin{equation*}
\dfrac{\partial f}{\partial c_{1313}^{\,j}}=
\dfrac{-\left[c_{2323}^{\,j}\right]^2}{2n}\left(c_{2323}^{\,j} c_{1313}^{\,j} - \left[c_{2313}^{\,j}\right]^2\right)^{-2}\,,
\end{equation*}
\begin{equation*}
\dfrac{\partial g}{\partial c_{1313}^{\,j}}=
\dfrac{1}{n}\dfrac{1}{2\left(c_{2323}^{\,j} c_{1313}^{\,j} - \left[c_{2313}^{\,j}\right]^2\right)}
-\dfrac{c_{1313}^{\,j}}{2n}c_{2323}^{\,j}\left(c_{2323}^{\,j} c_{1313}^{\,j} - \left[c_{2313}^{\,j}\right]^2\right)^{-2}\,,
\end{equation*}
\begin{equation*}
\dfrac{\partial h}{\partial c_{1313}^{\,j}}=
\dfrac{c_{2313}^{\,j}}{2n}\left(2c_{2323}^{\,j}\right)\left(c_{2323}^{\,j} c_{1313}^{\,j} - \left[c_{2313}^{\,j}\right]^2\right)^{-2}\,.
\end{equation*}
These lead to
\begin{equation*}
\left.\dfrac{\partial f}{\partial c_{1313}^{\,j}}\right|_{x=a}=\dfrac{-1}{2n\left(c_{1313}^{\,j}\right)^2}\,,
\end{equation*}
\begin{equation*}
\left.\dfrac{\partial g}{\partial c_{1313}^{\,j}}\right|_{x=a}=
\dfrac{1}{2n c_{2323}^{\,j} c_{1313}^{\,j}}
-\dfrac{c_{1313}^{\,j} c_{2323}^{\,j}}{2n\left(c_{2323}^{\,j} c_{1313}^{\,j}\right)^2}=0\,,
\end{equation*}
\begin{equation*}
\left.\dfrac{\partial h}{\partial c_{1313}^{\,j}}\right|_{x=a}=0\,.
\end{equation*}
So,
\begin{equation*}
\left.\dfrac{\partial \mathscr{C}_1}{\partial c_{1313}^{\,j}}\right|_{x=a}=\left.
\dfrac{\partial f}{\partial c_{1313}^{\,j}}\dfrac{1}{2\left[fg-h^2\right]}
\left[1-\dfrac{fg}{fg-h^2}\right]\right|_{x=a}=0
\end{equation*}
and, similarly,
\begin{equation*}
\left.\dfrac{\partial \mathscr{C}_2}{\partial c_{2323}^{\,j}}\right|_{x=a}=0\,.
\end{equation*}
Next, we consider the derivative with respect to $c_{2313}^{\,j}$.
\begin{equation*}
\dfrac{\partial \mathscr{C}_1}{\partial c_{2313}^{\,j}}=\dfrac{\partial f}{\partial c_{2313}^{\,j}}\dfrac{1}
{2\left[fg-h^2\right]}-\dfrac{f}{2}\left[fg-h^2\right]^{-2}\left[g\dfrac{\partial f}{\partial c_{2313}^{\,j}}+
f\dfrac{\partial g}{\partial c_{2313}^{\,j}}-2h\dfrac{\partial h}{\partial c_{2313}^{\,j}}\right]\,.
\end{equation*}
\begin{equation*}
\left.\dfrac{\partial  f}{\partial c_{2313}^{\,j}}\right|_{x=a}=
\left.\dfrac{-2 c_{2313}^{\,j} c_{2323}^{\,j}}{2n\left(c_{2323}^{\,j} c_{1313}^{\,j} -\left[c_{2313}^{\,j}\right]^2\right)^2}\right|_{x=a}=0\,,
\end{equation*}
\begin{equation*}
\left.\dfrac{\partial  g}{\partial c_{2313}^{\,j}}\right|_{x=a}=
\left.\dfrac{-2 c_{2313}^{\,j} c_{1313}^{\,j}}{2n\left(c_{2323}^{\,j} c_{1313}^{\,j} -\left[c_{2313}^{\,j}\right]^2\right)^2}\right|_{x=a}=0\,,
\end{equation*}
\begin{align*}
\left.\dfrac{\partial  h}{\partial c_{2313}^{\,j}}\right|_{x=a}&=
\left.\dfrac{1}{2n\left(c_{2323}^{\,j} c_{1313}^{\,j} -\left[c_{2313}^{\,j}\right]^2\right)}\right|_{x=a}+
\left.\dfrac{2 (c_{2313}^{\,j})^2 }{2n\left(c_{2323}^{\,j} c_{1313}^{\,j} -\left[c_{2313}^{\,j}\right]^2\right)^2}
\right|_{x=a}\\&=\dfrac{1}{2n c_{2323}^{\,j} c_{1313}^{\,j}}\,.
\end{align*}
Thus,
\begin{equation*}
\left.\dfrac{\partial \mathscr{C}_1}{\partial c_{2313}^{\,j}}\right|_{x=a}=
0-\dfrac{f}{2}\left[fg-h^2\right]^{-2}\left[0+0-0\right]=0\,,
\end{equation*}
and, similarly,
\begin{equation*}
\left.\dfrac{\partial\mathscr{C}_2}{\partial c_{2313}^{\,j}}\right|_{x=a}=0\,.
\end{equation*}
Hence, $\mathscr{C}'(a)=[0]$\,; the Jacobian matrix is zero.
\end{document}